\newtheorem{theorem}{Theorem}
\newcommand{\eref}[1]{(\ref{#1})}
\newcommand{\sref}[1]{Section~\ref{#1}}
\newcommand{\appref}[1]{Appendix~\ref{#1}}
\newcommand{\fref}[1]{Figure~\ref{#1}}
\newcommand{\cref}[1]{Constraint~\ref{#1}}
\newcommand{\thref}[1]{Theorem~\ref{#1}}
\newcommand{\algref}[1]{Algorithm~\ref{#1}}
\newcommand{\ignore}[1]{}
\begin{document}

\title{\vspace{-.5cm}Resilient Backhaul Network Design Using Hybrid Radio/Free-Space Optical Technology}
\author{
\authorblockN{Ahmed Douik$^{\star}$, Hayssam Dahrouj$^\bullet$, Tareq Y. Al-Naffouri$^{\dagger\prime}$, and Mohamed-Slim Alouini$^\dagger$\\}%
\authorblockA{$^\star$California Institute of Technology (Caltech), California, United States of America\\
   $^\bullet$Effat University, Kingdom of Saudi Arabia \\
   $^\dagger$King Abdullah University of Science and Technology (KAUST), Kingdom of Saudi Arabia \\
    $^\prime$King Fahd University of Petroleum and Minerals (KFUPM), Kingdom of Saudi Arabia \\
   $^{\star}$\{ahmed.douik\}@caltech.edu,$^\bullet$hayssam.dahrouj@gmail.com,$^\dagger$\{tareq.alnaffouri,slim.alouini\}@kaust.edu.sa
    }
\vspace{-.8cm} }

\maketitle

\IEEEoverridecommandlockouts

\begin{abstract}
The radio-frequency (RF) technology is a scalable solution for the backhaul planning. However, its performance is limited in terms of data rate and latency. Free Space Optical (FSO) backhaul, on the other hand, offers a higher data rate but is sensitive to weather conditions. To combine the advantages of RF and FSO backhauls, this paper proposes a cost-efficient backhaul network using the hybrid RF/FSO technology. To ensure a resilient backhaul, the paper imposes a given degree of redundancy by connecting each node through $K$ link-disjoint paths so as to cope with potential link failures. Hence, the network planning problem considered in this paper is the one of minimizing the total deployment cost by choosing the appropriate link type, i.e., either hybrid RF/FSO or optical fiber (OF), between each couple of base-stations while guaranteeing $K$ link-disjoint connections, a data rate target, and a reliability threshold. The paper solves the problem using graph theory techniques. It reformulates the problem as a maximum weight clique problem in the planning graph, under a specified realistic assumption about the cost of OF and hybrid RF/FSO links. Simulation results show the cost of the different planning and suggest that the proposed heuristic solution has a close-to-optimal performance for a significant gain in computation complexity.
\end{abstract}

\begin{keywords}
Backhaul network design, deployment cost minimization, link-disjoint graph, free-space optic, optical fiber.
\end{keywords}

\section{Introduction}\label{sec:int}

\lettrine[lines=2]{W}{ith} the drastic increase of smartphones and data consuming devices, the last few years witnessed a gigantic increase in the demand for mobile data services, e.g., the demand is more than doubling each year for the last quinquennial. Moreover, mobile traffic is expected to experience $100$-fold increase by $2020$ \cite{546843}. To cope with such traffic growth, service providers are required to upgrade their networks substantially. For cells become simultaneously smaller and denser, a particular emphasis on the backhaul network upgrade is crucial.

Optical fibers (OF) are a popular high data-rate technology for the backhaul design supporting many Gbit/s, e.g., $9.9$ Gbit/s for STM-$64$ \cite{5473878}. However, their use requires digging and protection, which limits their application to specific scenarios excluding small cells. Furthermore, the employments of OF links entail high initial investment.

The radio-frequency (RF) technology, on the other hand, is a scalable and relatively easy to deploy solution for the backhaul planning. RF links are not limited by the geographic features of the location; however, the data rate they provide is lower to OF links rates. For the inverse relation between bandwidth and transmission range, RF technologies operating at high bandwidth, e.g., microwave and millimeter wave (mmwave), are limited in coverage. Further, due to spectrum shortage, the initial investment in the licensed spectrum is of high importance \cite{5185525}.

Recently, Free-Space Optics (FSO) technology emerge as a reasonable alternative for next-generation backhaul design. By transmitting a laser beam in the micrometer range \cite{64161469}, FSO photo-detector transceivers are immune to electromagnetic interference generated by nearby RF links. Such micrometer waves, also referred to as visible light, fall in the unlicensed part of the spectrum. Moreover, by using multiplexing techniques such as wavelength-division multiplexing, FSO can achieve up to $10$ Gbit/s over one kilometer and $1.28$ Tbit/s over $210$ meters \cite{64161469}. However, FSO performance is highly affected by weather conditions, e.g., rain, fog, and snow.

In order to benefit from both the low cost and reliability of the RF technology and the high data rate provided by the FSO technology, the hybrid RF/FSO technology is a suitable solution for backhaul design \cite{6844864}. Hybrid RF/FSO transceivers communicate using both the RF and FSO links and switch to FSO or RF only conditioned by the electromagnetic interference levels and the weather conditions.

The cost optimization problem of the backhaul is studied in \cite{545168,5484165} for different communication technologies. The authors in \cite{Smadi:09,Rajakumar:08} consider minimizing the cost of upgrading a network by optimally placing the minimum number of FSO transceivers to achieve a given target throughput. The hybrid RF/FSO is mainly considered in \cite{6134071,4609027,6777766}. The authors in \cite{6134071} propose a mixed integer program to upgrade an RF backhaul network by optimally placing FSO links. The setup is extended in \cite{4609027} to include reliability and throughput constraints. Reference \cite{6777766} suggests deploying mirrors for non-line-of-sight FSO connections. The backhaul design proposed in this paper is especially related to the works \cite{129456,646269,Dahrouj_backnet_magazine}. The authors in \cite{129456} propose designing a cost-efficient backhaul using hybrid RF/FSO under throughput constraints. The model is extended in \cite{646269} to incorporate reliability constraints. In \cite{Dahrouj_backnet_magazine}, the authors present a business case of an RF backhaul network, which motivates the need for deploying resilient hybrid RF/FSO backhauls.

This paper proposes the hybrid RF/FSO as a cost-effective technology for backhaul network design. To ensure a resilient backhaul, the paper imposes a given degree of redundancy by connecting each node through $K$ link-disjoint paths so as to cope with potential link failures. The paper then considers the network planning problem of minimizing the total deployment cost by choosing the appropriate link type, i.e., either hybrid RF/FSO or optical fiber, between each couple of base-stations by guaranteeing $K$ link-disjoint connections, a data rate target, and a reliability threshold.

The paper main contribution is to propose an explicit close-to-optimal solution to the backhaul planning problem under the aforementioned constraints, i.e., connectivity, reliability and data rate constraints. The paper proposes solving the problem using techniques from graph theory. It reformulates the problem as a maximum weight clique problem in the planning graph under a specified realistic assumption about the cost of OF and hybrid RF/FSO links. Simulation results suggest that the proposed heuristic solution provides a close-to-optimal performance for a significant gain in computation complexity.

\section{System Model and Parameters}\label{sec:sys}

This paper considers a network composed of $M$ base-stations denoted by the set $\mathcal{B}=\{b_1,\ \cdots,\ b_M\}$, wherein base-stations are connected to each other using either optical fibres or hybrid RF/FSO backhaul links. All nodes \footnote{The terms node and base-station are used interchangeably throughout this paper} are assumed to have a line-of-sight connection. The paper addresses the problem of minimizing the cost of backhaul planning under connectivity, reliability and data rates constraints and proposes choosing the appropriate cost-effective backhaul connection between BSs.

Let $\pi^{(O)}(b_i,b_j)=\pi^{(O)}_{ij}$ and $\pi^{(h)}(b_i,b_j)=\pi^{(h)}_{ij}$ be the cost of deploying an OF and a hybrid RF/FSO link, respectively, between nodes $b_i$ and $b_j$. As hybrid RF/FSO is a cost effective solution as compared to optical fibres, this paper assumes that $\pi^{(h)}_{ij} \leq \pi^{(O)}_{ij}$, $\forall (b_i,b_j) \in \mathcal{B}^2$.

The connectivity constraint is achieved by connecting each pair of nodes in the network via single or multi-hop connections through $1 \leq K < M$ link-disjoint paths. Such path diversity allows the network to be more resilient to link failure by providing multiple alternative routing solutions. \fref{fig:network} shows a planning for a network composed of $5$ base stations for $1$ and $2$ link-disjoint paths. Let $\mathcal{K}(b_i,b_j)=\mathcal{K}_{ij}$ be the path diversity between nodes $b_i$ and $b_j$, $\forall (b_i,b_j) \in \mathcal{B}^2$. In other words, $\mathcal{K}(b_i,b_j)$ is the maximum number of distinct and disjoint path that link node $b_i$ to $b_j$, e.g., $\mathcal{K}(1,4)=3$ and $\mathcal{K}(2,3)=2$ in \fref{fig:network} of $2$ link-disjoint path. Note that $2 \rightarrow 1 \rightarrow 4 \rightarrow 3$ and $2 \rightarrow 1 \rightarrow 5 \rightarrow 4 \rightarrow 3$ are not considered disjoint paths as they share the link $4 \rightarrow 3$.

\begin{figure}[t]
\centering
\includegraphics[width=0.8\linewidth]{./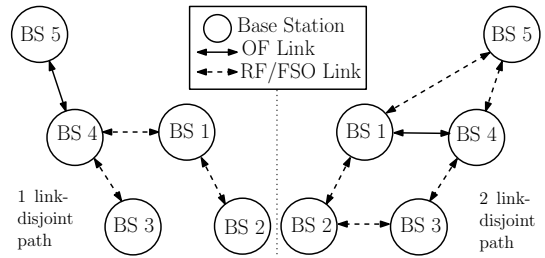}\\
\caption{Network containing $5$ base-stations connected together with OF and hybrid RF/FSO links for $1$ and $2$ link-disjoint paths.}\label{fig:network}
\end{figure}

Let $0 \leq \alpha \leq 1$ be the reliability threshold at each node. Unlike OF links that are perfectly reliable, the reliability of hybrid RF/FSO links heavily depends on several factors, such as transmission distance, power, and weather conditions. This paper assumes independent link failures. Let $R^{(O)}(b_i,b_j)=R^{(O)}_{ij}$ be the reliability of the link connecting nodes $b_i$ and $b_j$. Similarly, in contrast with OF links that always satisfy the targeted data rate $D_t$, the provided data of a hybrid RF/FSO link depends, for a fixed transmit power, on the distance separating the two end nodes and the weather conditions. Let $D^{(O)}(b_i,b_j)=D^{(O)}_{ij}$ be the provided data rate by the link between base-stations $b_i$ and $b_j$.

\section{Problem Formulation and Approximation} \label{sec:pro}

\subsection{Problem Formulation}

The paper proposes to minimize the deployment cost of the backhaul network by connecting base-stations using either optical fibers or hybrid RF/FSO links under the following practical constraint:
\begin{itemize}
\item C1: Each node can be connected to any other through either an OF or a hybrid RF/FSO link.
\item C2: Each node is connected to any other node with at least $K$ link-disjoint paths.
\item C3: The reliability threshold $\alpha$ is exceeded at each node.
\item C4: The provided data rate at each base-station is bigger or equal than the target data rate $D_t$.
\end{itemize}

Let $X_{ij}$ and $Y_{ij}, 1 \leq i,j \leq M$ be two binary variables indicating if base-stations $b_i$ and $b_j$ are connected through an OF connection, i.e., $X_{ij}=1$ or a hybrid RF/FSO, i.e., $Y_{ij}=1$, link, respectively. From the link reciprocity, the variables $X_{ij}$ and $Y_{ij}$ are symmetric. In other words, the variable should satisfy:
\begin{align}
X_{ij} &= X_{ji} \label{eq:c01} \\
Y_{ij} &= Y_{ji},\ 1 \leq i,j\leq M. \label{eq:c02}
\end{align}
Given that, at maximum, only one type of connection can exist between any pair of nodes, i.e., system constraint C1, it can easily be seen that such binary variables are mutually exclusive. In other words, the following condition is verified by any feasible solution:
\begin{align}
&X_{ij}Y_{ij} = 0,\ 1 \leq i,j\leq M. \label{eq:c1}
\end{align}

The connectivity constraint C2 can be reached by guaranteeing that the minimal maximum number of disjoint path, i.e., $\min_{1 \leq i \neq j \leq M} \mathcal{K}_{ij}$, between any couple of nodes exceeds $K$ link-disjoint paths. Hence, the connectivity constraint C2 can be written as follows:
\begin{align}
\min_{1 \leq i \neq j \leq M} \mathcal{K}_{ij} \geq K. \label{eq:c2}
\end{align}

Given independent link failures, the reliability condition C3 at node $b_i$ is violated if and only if all the links connecting base-station $b_i$ are experiencing a failure. Therefore, the system constraint C3 can be written as follows:
\begin{align}
&1 - \prod_{j=1}^M (1 - X_{ij})(1 - Y_{ij} R^{(h)}_{ij}) \geq \alpha,\ 1 \leq i\leq M. \label{eq:c3}
\end{align}

Finally, the data rate constraint C4 implies that the provided data rate at each node needs to exceed a predefined threshold $D_t$. Given that OF links always satisfy the target data rate, the offered data rate of such links can be assumed to be $D_t$ which allows to write the data rate constraint as follows:
\begin{align}
\sum_{j=1}^M X_{ij}D_t + Y_{ij}D^{(h)}_{ij} \geq D_t,\ 1 \leq i\leq M. \label{eq:c4}
\end{align}

Combining the constraints \eref{eq:c01}, \eref{eq:c02}, \eref{eq:c1}, \eref{eq:c2}, \eref{eq:c3}, and \eref{eq:c4}, the problem of minimizing the cost of the backhaul network planning can be formulated as:
\begin{subequations}
\label{Original_optimization_problem}
\begin{align}
\min & \ \sum_{i=1}^M \sum_{j=1}^M X_{ij}\pi^{(O)}_{ij} + Y_{ij}\pi^{(h)}_{ij} \\
{\rm s.t.\ } & X_{ij} = X_{ji} \label{eq:1} \\
&Y_{ij} = Y_{ji} \label{eq:2} \\
&X_{ij}Y_{ij} = 0 \label{eq:3} \\
&\min_{1 \leq i \neq j \leq M} \mathcal{K}_{ij} \geq K \label{eq:4} \\
&1 - \prod_{j=1}^M (1 - X_{ij})(1 - Y_{ij} R^{(h)}_{ij}) \geq \alpha \label{eq:5} \\
&\sum_{j=1}^M X_{ij}D_t + Y_{ij}D^{(h)}_{ij} \geq D_t\label{eq:6} \\
&X_{ij},Y_{ij} \in \{0,1\},\ 1 \leq i,j\leq M \label{eq:7},
\end{align}
\end{subequations}
where the optimization is over both binary variables $X_{ij}$ and $Y_{ij}$.

The $0-1$ mixed integer program proposed in \eref{Original_optimization_problem} is hard to solve and may involve a search over all possible combinations of the binary variables, which results in a high computational complexity. The difficulty lies, in particular, in the connectivity constraint \eref{eq:4} and the concurrent optimization over both binary variables $X_{ij}$ and $Y_{ij}$. Let the optimal solution to the optimization problem \eref{Original_optimization_problem} be called the optimal planning. To overcome such computation bottleneck, this paper proposes to approximate the optimization problem by a more tractable one. Hence, the next subsection suggests solving the problem for only the variables $X_{ij}$. In other words, it aims to discover the minimal cost planning solution when only optical fibers are allowed. Afterwards, such solution, referred to as the optical fibre planning, is used to reformulate the problem as a maximum weight clique problem under the assumption that long distance hybrid RF/FSO connections are more expensive than short distance OF links. Such assumption is justified by the fact that, for short distances, OF links whose cost mainly depends on the link's length, are cheaper than hybrid RF/FSO ones whose cost heavily depends on the transceivers price. The solution to the maximum weight clique problem is called the hybrid RF/FSO planning.

\subsection{Backhaul Design Using Optical Fiber Only}

This subsection considers the problem of backhaul network design using only optical fibres. The first part of this paragraph simplifies the problem formulation when only OF links are allowed. Afterwards, an algorithm to reach the optimal OF planning is proposed. By setting $Y_{ij} = 0,\ 1 \leq i,j\leq M$ in the original problem formulation \eref{Original_optimization_problem}, the OF planning problem can be written as follows:
\begin{subequations}
\label{eq:8}
\begin{align}
\min & \ \sum_{i=1}^M \sum_{j=1}^M X_{ij}\pi^{(O)}_{ij} \\
{\rm s.t.\ } & X_{ij} = X_{ji} \label{eq:9} \\
&\min_{1 \leq i \neq j \leq M} \mathcal{K}_{ij} \geq K \label{eq:10} \\
&1 - \prod_{j=1}^M (1 - X_{ij}) \geq \alpha \label{eq:11} \\
&\sum_{j=1}^M X_{ij}D_t \geq D_t\label{eq:12} \\
&X_{ij} \in \{0,1\},\ 1 \leq i,j\leq M \label{eq:13},
\end{align}
\end{subequations}

In order to simplify the problem formulation, constraints \eref{eq:11} and \eref{eq:12} are shown to be redundant in \eref{eq:8}. For any feasible solution $X_{ij}$, constraint \eref{eq:10} implies that $\mathcal{K}_{ij} \geq K,\ 1 \leq i,j\leq M$. In particular, as $K \geq 1$, we get $\mathcal{K}_{ij} \geq 1$, which implies:
\begin{align}
\sum_{j=1}^M X_{ij} \geq 1.
\label{eq:14}
\end{align}
Finally, using the inequality \eref{eq:14}, it becomes clear that constraints \eref{eq:11} and \eref{eq:12} are redundant. Therefore, the OF planning problem can be simplified to:
\begin{subequations}
\label{eq:15}
\begin{align}
\min& \ \sum_{i=1}^M \sum_{j=1}^M X_{ij} \pi^{(O)}_{ij} \\
{\rm s.t.\ }& X_{ij} = X_{ji} \label{eq:16} \\
&\min_{1 \leq i \neq j \leq M} \mathcal{K}_{ij} \geq K \label{eq:17} \\
&X_{ij} \in \{0,1\},\ 1 \leq i,j\leq M.
\end{align}
\end{subequations}

The key idea to solving the optimization problem \eref{eq:15} mentioned above is to generate a network with $K$ link-disjoint paths by first creating a system with $K-1$ link-disjoint paths. By prohibiting the already existing connections, the aim is to find the optimal set of links so as to produce a connected network. Combining both solutions results in a system with $K$ link-disjoint paths. Therefore, this subsection suggests successively generating systems whose minimal maximum number of disjoint paths, i.e., $\min_{1 \leq i \neq j \leq M} \mathcal{K}_{ij}$ increases by $1$ at each iteration. More specifically, the algorithm first generates the optimal planning for a number of link-disjoint paths $\min_{1 \leq i \neq j \leq M} \mathcal{K}_{ij}=1$. Afterwards, the algorithm adds connections to such optimal $1$ link-disjoint path solution to produce the optimal $2$ link-disjoint paths network. The process is repeated $K$ times so as to achieve the required resilience.

In order to generate the optimal $k+1$ link-disjoint paths connected network, given the optimal $k$ link-disjoint paths system, this paper proposes a clustering solution in which the cheapest links between any two clusters are successfully created; the groups are then merged. As explained above, the existing connections in the previous iterations of the algorithm are prohibited in the next iterations. This can easily be done by considering a modified cost function $\overline{\pi}^{(O)}(.)$ that takes the original value of the cost function $\pi^{(O)}(.)$ by the link have never been used before and $\infty$ otherwise. Therefore, to generate a $k+1$ link-disjoint paths connected network, a cluster is created for each base-station. The price of connecting two clusters is computed as the minimal cost of joining each couple of base-stations in the clusters. In other words, the cost of connecting the cluster $Z$ and $Z^\prime$ is defined as:
\begin{align}
\overline{\pi}^{(O)}(Z,Z^\prime) = \min\limits_{\substack{b \in Z \\ b^{\prime} \in Z^{\prime}}} \overline{\pi}^{(O)}(b , b^{\prime}).
\end{align}

The cheapest link between two clusters is deployed, and the cluster is merged into a single one. Such a process is repeated until all the initial clusters are merged into a single one. The steps of the algorithm are summarized in \algref{alg1} whose performance is characterized by the following theorem:
\begin{theorem}
\algref{alg1} produces the optimal solution to the optimization problem \eref{eq:15}.
\label{th1}
\end{theorem}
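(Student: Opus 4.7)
The plan is to prove Theorem~\ref{th1} by verifying that the planning returned by \algref{alg1} is both feasible and optimal for \eref{eq:15}. For feasibility, I would argue that after the $K$ outer iterations the deployed edges decompose into $K$ pairwise edge-disjoint spanning trees $T_1,\ldots,T_K$: each outer iteration is a Boruvka/Kruskal-style cluster-merging procedure on the currently available edges, which terminates with a spanning tree on $\mathcal{B}$, and the modified cost $\overline{\pi}^{(O)}$ forbids re-use of edges from previous iterations, so the trees are edge-disjoint. Having $K$ such trees immediately gives $\mathcal{K}_{ij}\geq K$ for every pair $(b_i,b_j)$, since each tree contributes a $b_i$--$b_j$ path and these paths are mutually edge-disjoint, so \eref{eq:17} holds; the symmetry \eref{eq:16} is automatic because deployed links are undirected.

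For optimality, I would proceed by induction on $K$ combined with a matroid-type exchange argument. The base case $K=1$ is Kruskal/Boruvka's classical minimum spanning tree result. For the inductive step, the key observation is that the family of edge sets partitionable into $k$ forests is the ground set of the $k$-fold union of the graphic matroid on $\mathcal{B}$, and \algref{alg1} is a valid greedy algorithm on this matroid: at each step the cheapest previously-unused edge that preserves the $k$-partitionable property is added. By Edmonds' matroid-union theorem the greedy algorithm on a matroid returns a minimum-weight basis, hence the deployed edges attain the minimum total cost among all edge sets decomposable into $K$ edge-disjoint spanning trees. A more self-contained alternative is a direct exchange: given any feasible planning strictly cheaper than the algorithm's output, identify the first differing edge in weight-sorted order and perform a swap that preserves both the spanning and edge-disjointness properties across all $K$ layers, contradicting the assumption.

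The main obstacle I anticipate is bridging the gap between the constraint \eref{eq:17}, which only asks for a $K$-edge-connected subgraph, and the ``$K$ edge-disjoint spanning trees'' structure that \algref{alg1} naturally produces. By Nash-Williams/Tutte the latter implies but is in principle stronger than the former, so one must argue that under the cost model here (non-negative $\pi^{(O)}_{ij}$, and the uniform treatment of all OF links in the data-rate and reliability constraints, which were already shown redundant) an optimal $K$-edge-connected planning can always be chosen to consist of $K$ edge-disjoint spanning trees; otherwise a cheaper feasible planning could be constructed, contradicting the assumed optimality. The other delicate point is the exchange step itself: any swap between an edge chosen by the algorithm and an edge in a purportedly cheaper solution must simultaneously preserve the $K$-forest partition across all layers, which is exactly where invoking matroid union avoids tedious bookkeeping and yields the cleanest route to the conclusion.
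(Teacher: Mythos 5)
Your induction-on-$K$ skeleton and your feasibility observation (each pass deploys a spanning tree edge-disjoint from the previous ones, and $K$ edge-disjoint spanning trees give $\mathcal{K}_{ij}\geq K$ for every pair) match the structure of the paper's proof in \appref{ap1}. The optimality step, however, has a genuine gap: \algref{alg1} is \emph{not} the greedy algorithm on the $K$-fold union of the graphic matroid, so Edmonds' matroid-union theorem does not certify its output. The matroid greedy scans \emph{all} edges once, in globally nondecreasing cost order, and accepts an edge iff the accumulated set stays partitionable into $K$ forests. \algref{alg1} instead runs $K$ successive Boruvka/Kruskal passes; in pass $k$ it accepts the cheapest \emph{unused} edge joining two clusters \emph{of the current pass}, never revisits edges it declined in earlier passes, and its acceptance test is ``joins two clusters of layer $k$,'' not ``keeps the cumulative set $k$-forest-partitionable.'' These procedures genuinely differ: pass $1$ can commit to a star centered at a base-station $b_i$ that is closest to everyone, after which no edge incident to $b_i$ survives for pass $2$, even though a minimum-weight basis of the union matroid exists and is found by the global greedy --- this is exactly why Roskind--Tarjan's minimum-cost edge-disjoint spanning tree algorithm is built on matroid union rather than on iterated MSTs. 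What must actually be proved is the nesting property that some optimal $k$-layer solution contains an optimal $(k-1)$-layer solution, and that is precisely the exchange argument the paper carries out in \appref{ap1} (split the edges of a purported optimum into a removable part $\mathcal{A}$ and a remainder, and swap the remainder for the optimal $(k-1)$-solution). So the ``more self-contained alternative'' you mention in passing is not an optional shortcut-avoider; it is the required route, and the matroid-union invocation cannot be substituted for it as the algorithm is written.

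The second issue is the one you flag but do not close: constraint \eref{eq:17} asks for $K$ link-disjoint paths between every pair, i.e., $K$-edge-connectivity by Menger's theorem, which is strictly weaker than containing $K$ edge-disjoint spanning trees. Your proposed resolution (``otherwise a cheaper feasible planning could be constructed, contradicting the assumed optimality'') is circular, and the underlying claim is delicate: a $K$-edge-connected subgraph may have fewer than $K(M-1)$ edges and be strictly cheaper than any union of $K$ spanning trees, and minimum-cost $K$-edge-connected spanning subgraph is NP-hard for $K\geq 2$, so no unconditional equivalence should be expected. The paper's own appendix silently identifies the two notions, so you are no worse off than the paper on this point, but you should present it as an unproved restriction of the feasible set (planning within the class of $K$ edge-disjoint spanning trees) rather than as something that ``can be argued.''
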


\begin{algorithm}[t]
\begin{algorithmic}
\REQUIRE $\mathcal{B}$, $K$, and $\pi^{(O)}(.)$.
\STATE Initialize $X_{ij}=0,\ 1 \leq i,j\leq M $.
\STATE Initialize $\overline{\pi}^{(O)}_{ij} = \pi^{(O)}_{ij},\ 1 \leq i,j\leq M $
\FOR{$k=1:K$}
\FOR{$i=1:M$}
\FOR{$j=i+1:M$}
\IF{$X_{ij} = 1$}
\STATE Set $\overline{\pi}^{(O)}_{ij}= \overline{\pi}^{(O)}_{ji} = \infty$.
\ENDIF
\ENDFOR
\ENDFOR
\STATE Initialize $\mathcal{Z} = \varnothing$.
\FORALL {$b \in \mathcal{B}$}
\STATE Set $\mathcal{Z} = \{\mathcal{Z},\{b\}\}$.
\ENDFOR
\WHILE {$|\mathcal{Z}| > 1$}
\STATE Set $(Z_i,Z_j) = \arg \min\limits_{\substack{Z,Z^{\prime} \in \mathcal{Z} \\ Z \neq Z^{\prime}}} \left[ \min\limits_{\substack{b \in Z \\ b^{\prime} \in Z^{\prime}}} \overline{\pi}^{(O)}(b , b^{\prime}) \right] $.
\STATE Set $(b_i,b_j) = \arg \min\limits_{\substack{b \in Z_i \\ b^{\prime} \in Z_j}} \overline{\pi}^{(O)}(b , b^{\prime})$.
\STATE Set $X_{ij} = X_{ji} = 1$.
\STATE Set $\mathcal{Z} = \mathcal{Z} \setminus \{Z_i\} \setminus \{Z_j\}$.
\STATE Set $\mathcal{Z} = \{\mathcal{Z},\{Z_i,Z_j\}\}$.
\ENDWHILE
\ENDFOR
\end{algorithmic}
\caption{Optimal optical fibres planning.}
\label{alg1}
\end{algorithm}

\begin{proof}
In order to establish the optimality of the solution reached by \algref{alg1}, an induction approach is used. Given the optimal solution to a $k-1$ link-disjoint connected network, the algorithm is demonstrated to output the optimal solution to a $k$ link-disjoint connected network. This is done by first showing that the solution is a feasible one and that any other solution results in a higher cost. The complete proof can be found in \appref{ap1}.
\end{proof}

Let $\overline{X}_{ij}$ be the optimal solution to the optical fiber planning problem reached by \algref{alg1}. The next subsection relates such solution to the original optimization problem \eref{Original_optimization_problem} and suggests approximating it by a more tractable problem.

\subsection{Problem Approximation}

This subsection describes the solution of the optimal OF planning $\overline{X}_{ij}$ of the original optimization problem \eref{Original_optimization_problem} and suggests approximating it with a more tractable one. The fundamental rationale of the approximation is to make use of the OF planning to produce a $K$ link-disjoint connected graph. In fact, it can easily be seen that since the planning $\overline{X}_{ij}$ is a $K$ link-disjoint one, then any planning $X_{ij}$ and $Y_{ij}$ verifying $X_{ij}+Y_{ij}=1$ if $\overline{X}_{ij}=1$ also produces a $K$ link-disjoint graph. Furthermore, the non-existence of an OF link in the optimal OF planning, i.e., $\overline{X}_{ij}=0$, does not add an extra constraint on the feasibility of the plan $X_{ij}$ and $Y_{ij}$. Therefore, the following constraint is a subset of the connectivity constraint \eref{eq:4}:
\begin{align}
(X_{ij} + Y_{ij})\overline{X}_{ij} = \overline{X}_{ij}. \label{eq:18}
\end{align}

Using the approximations provided in \eref{eq:18}, the backhaul network design problem using the hybrid RF/FSO technology can be approximated by the following problem:
\begin{subequations}
\label{Approximate_optimization_problem}
\begin{align}
\min& \ \sum_{i=1}^M \sum_{j=1}^M X_{ij}\pi^{(O)}_{ij}+Y_{ij}\pi^{(h)}_{ij} \label{eq:20} \\
{\rm s.t.\ } & X_{ij} = X_{ji} \label{eq:21} \\
&Y_{ij} = Y_{ji} \label{eq:22}\\
&X_{ij}Y_{ij} = 0 \label{eq:23} \\
&(X_{ij} + Y_{ij})\overline{X}_{ij} = \overline{X}_{ij} \label{eq:24} \\
&1 - \prod_{j=1}^M (1 - X_{ij})(1 - Y_{ij} R^{(h)}_{ij}) \geq \alpha \label{eq:25} \\
&\sum_{j=1}^M X_{ij}D_t + Y_{ij}D^{(h)}_{ij} \geq D_t \label{eq:26}\\
&X_{ij},Y_{ij} \in \{0,1\},\ 1 \leq i,j\leq M.
\end{align}
\end{subequations}

Let $X_{ij}$ and $Y_{ij}$ be a solution to \eref{Approximate_optimization_problem}. As the constraint \eref{eq:18} is strictly included in \eref{eq:4} then the solution is feasible to the original optimization problem \eref{Original_optimization_problem}.

\section{Backhaul Design Using Hybrid Radio/Free-Space Optical Technology} \label{sec:bac}

The problem approximation provided in \eref{Approximate_optimization_problem} is equivalent to the one illustrated in Lemma 3 of \cite{646269}. Hence, this section proposes a similar method to efficiently solving the problem through the design of the set of neighbours and the planning graph. Afterwards, under the assumption that long distance hybrid RF/FSO connections are more expensive than short distance OF links, the section reformulates the problem as a maximum weight clique problem in the planning graph.

\subsection{Set of Neighbours}

Let the set $\overline{\mathcal{N}}_i$ of neighbours of base-station $b_i$ be defined as follows:
\begin{align}
\mathcal{N}_i = \left\{b_j \in \mathcal{B} \text{ such that } \pi^{(O)}_{ij} \leq \max_{b_j \in \mathcal{B}}\overline{X}_{ik} \pi^{(O)}_{ik} \right\},
\label{eq:28}
\end{align}

Let $b_{i^*}$ be the node that can be connected to $b_i$ with the cheapest OF link, i.e., $b_{i^*} = \arg \min_{b_j \in \mathcal{B}} \pi^{(O)}_{ij}$. This paper assumes that hybrid RF/FSO between non-neighbouring nodes is more expensive than OF links between each node and its closest neighbour. In other words, the paper assumes that the following equation holds $\forall\ (b_i,b_j) \notin \mathcal{N}_j \times \mathcal{N}_i$:
\begin{align}
\pi^{(O)}_{ii^*}+\pi^{(O)}_{jj^*} \leq \pi^{(h)}_{ij}.\label{eq:27}
\end{align}

The following theorem characterizes the optimal solution $X^*_{ij}$ and $Y^*_{ij}$ of the optimization problem \eref{Approximate_optimization_problem}:
\begin{theorem}
The optimal solution to \eref{Approximate_optimization_problem} satisfies that all connections for an arbitrary node $b_i$ are inside its set of neighbours $\mathcal{N}_i$. In other words, $X^*_{i,j}+Y^*_{i,j}=1$ only if $(i,j) \in \mathcal{N}_j \times \mathcal{N}_i$.
\label{th2}
\end{theorem}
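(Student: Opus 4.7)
The plan is to prove the statement by contradiction: assume the optimal $(X^*, Y^*)$ has a connected pair $(b_i, b_j)$ with $X^*_{ij}+Y^*_{ij}=1$ but $(b_i,b_j)\notin \mathcal{N}_j\times\mathcal{N}_i$, and build a strictly cheaper feasible solution. Without loss of generality, suppose $b_j\notin\mathcal{N}_i$, so by definition of $\mathcal{N}_i$ in \eref{eq:28}, $\pi^{(O)}_{ij}>\max_{k}\overline{X}_{ik}\pi^{(O)}_{ik}$. A preliminary remark is that $\overline{X}_{ij}=1$ would force $\pi^{(O)}_{ij}\leq\max_{k}\overline{X}_{ik}\pi^{(O)}_{ik}$; hence necessarily $\overline{X}_{ij}=0$, so the offending connection is an ``extra'' edge, not one demanded by the OF planning constraint \eref{eq:24}. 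This observation is crucial, because it means the edge can be removed without touching the $K$ link-disjoint scaffold inherited from $\overline{X}$.

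The construction of the improved solution is the natural swap: form $(\tilde{X},\tilde{Y})$ from $(X^*,Y^*)$ by deleting the edge $(i,j)$, i.e.\ setting $\tilde{X}_{ij}=\tilde{X}_{ji}=\tilde{Y}_{ij}=\tilde{Y}_{ji}=0$, and then, if they are not already present, inserting the OF edges $(b_i,b_{i^*})$ and $(b_j,b_{j^*})$ to the respective nearest neighbours $i^*=\arg\min_{k}\pi^{(O)}_{ik}$ and $j^*=\arg\min_{k}\pi^{(O)}_{jk}$. The symmetry conditions \eref{eq:21}-\eref{eq:22} and mutual exclusion \eref{eq:23} are preserved by construction; \eref{eq:24} is unaffected since only non-planning edges are modified; the reliability \eref{eq:25} at $b_i$ and $b_j$ can only improve because each newly added OF link contributes a factor $(1-\tilde{X}_{i i^*})=0$ into the product, driving the left-hand side to $1\geq\alpha$; and similarly the data rate \eref{eq:26} at these nodes is automatically met since an OF link alone delivers $D_t$.

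On the cost side, removing the edge $(i,j)$ saves at least $\min(\pi^{(O)}_{ij},\pi^{(h)}_{ij})=\pi^{(h)}_{ij}$, using the standing cost inequality $\pi^{(h)}_{ij}\leq\pi^{(O)}_{ij}$. The two inserted edges cost at most $\pi^{(O)}_{ii^*}+\pi^{(O)}_{jj^*}$, and the assumption \eref{eq:27} ensures $\pi^{(O)}_{ii^*}+\pi^{(O)}_{jj^*}\leq\pi^{(h)}_{ij}$, so the total cost does not increase; whenever the candidate replacements are already present the savings are realized with no compensating expense, making the improvement strict. This contradicts the optimality of $(X^*,Y^*)$ and establishes the theorem.

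The main obstacle I expect is the degenerate case where one of the replacement edges $(i,i^*)$ or $(j,j^*)$ already exists in $(X^*,Y^*)$: the swap then reduces to a pure deletion, and one must double-check that the reliability and data rate requirements at $b_i$ (resp.\ $b_j$) do not crucially depend on the removed contribution from the $(i,j)$ link. This is handled by noting that an already-present OF link to the nearest neighbour singlehandedly saturates both \eref{eq:25} (perfect reliability zeroes out a factor) and \eref{eq:26} (it delivers $D_t$), so the deletion is always safe. With this corner case settled, the contradiction argument closes cleanly.
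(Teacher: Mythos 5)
Your proposal follows essentially the same route as the paper: both argue by replacement, using assumption \eref{eq:27} to trade any connection that leaves the neighbour sets for two optical-fibre links to the nearest neighbours $b_{i^*}$ and $b_{j^*}$, and both cover the out-of-neighbourhood OF link through the chain $\pi^{(O)}_{ii^*}+\pi^{(O)}_{jj^*}\leq\pi^{(h)}_{ij}\leq\pi^{(O)}_{ij}$. Your write-up is in fact more explicit than the paper's (which defers the details to Theorem~2 of \cite{646269}); in particular the observation that $\overline{X}_{ij}=1$ would force $b_j\in\mathcal{N}_i$, so the offending edge is never demanded by constraint \eref{eq:24}, is a genuinely useful step that the paper leaves implicit.

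There is, however, one corner case your argument mishandles. Your treatment of the ``replacement edge already present'' situation assumes the pre-existing link $(b_i,b_{i^*})$ is an OF link, which indeed saturates \eref{eq:25} and \eref{eq:26} on its own. But that link may instead be a hybrid RF/FSO link, i.e.\ $Y^*_{ii^*}=1$: then you can neither insert the OF edge (it would violate the mutual exclusion \eref{eq:23}) nor rely on the existing link to restore reliability and data rate at $b_i$ after deleting $(i,j)$, since $R^{(h)}_{ii^*}<1$ and $D^{(h)}_{ii^*}<D_t$ are possible. The fix stays within your budget: upgrade that hybrid link to OF, at a net extra cost of $\pi^{(O)}_{ii^*}-\pi^{(h)}_{ii^*}\leq\pi^{(O)}_{ii^*}$, so the total added expense is still at most $\pi^{(O)}_{ii^*}+\pi^{(O)}_{jj^*}\leq\pi^{(h)}_{ij}$. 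A second, more cosmetic issue (shared with the paper) is that \eref{eq:27} is non-strict, so the swap only shows the cost does not increase; strictly one should either conclude that \emph{some} optimal solution has all connections inside the neighbour sets, or iterate the swap until no out-of-neighbourhood edge remains.
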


\begin{proof}
To show this theorem, all the planning decisions that violate the condition stated in the theorem are proved to be suboptimal. In fact, it can be seen from \eref{eq:27} that, if the set of neighbours allows to have a $K$ link-disjoint graph, then the connection outside the set of neighbours can be replaced as follows:
\begin{itemize}
\item If the link is a hybrid RF/FSO connection then it is cheaper to replace the link by $2$ OF links as suggested in \eref{eq:27}.
\item If the link is an OF connection, then it is more expensive than a hybrid RF/FSO one that can be replaced by connection inside the set of neighbours.
\end{itemize}
Therefore, the optimal solution to \eref{Approximate_optimization_problem} satisfies that all connections for an arbitrary node $b_i$ are inside its set of neighbours $\mathcal{N}_i$. The details of the proof are omitted as it mirrors the steps used in proving Theorem 2 in \cite{646269}.
\end{proof}

\subsection{Planning Graph and Proposed Backhaul Design}

The planning graph $\mathcal{G}(\mathcal{V},\mathcal{E})$ is a tool introduced in \cite{646269} to solving the planning problem using the hybrid RF/FSO technology for a $1$ link-disjoint connected graph. Given the optimal solution to the optical fibre planning and the definition of the minimal set of neighbours, such tool can be extended to solve the planning problem for a $K$ link-disjoint connected graph.

To generate the planning graph, a vertex is created for each combination of connections inside each cluster that satisfy the reliability and the data rate constraint. More specifically, let $\mathcal{C}_i$ be the set of such possible combinations for base-station $b_i$ defined as follows:
\begin{align}
\mathcal{C}_i = \{((X_{ij_1},Y_{ij_1}),\ \cdots,\ (X_{ij_{|\mathcal{N}_i|}},&Y_{ij_{|\mathcal{N}_i|}})), \text{ such that } \nonumber \\
\bigcup_{j \in \mathcal{N}_i}b_{j} &= \mathcal{N}_i \nonumber \\
X_{ij}Y_{ij} &= 0, \forall \ j \in \mathcal{N}_i \nonumber \\
(X_{ij}+Y_{ij})\overline{X}_{ij} &= \overline{X}_{ij}, \forall \ j \in \mathcal{N}_i \nonumber \\
1 - \prod_{j \in \mathcal{N}_i} (1 - X_{ij})(1 - Y_{ij} R^{(h)}_{ij}) &\geq \alpha \nonumber \\
\sum_{j \in \mathcal{N}_i} X_{ij}D_t + Y_{ij}D^{(h)}_{ij} &\geq D_t \}.
\end{align}

For each possible combination $\gamma \in \mathcal{C}_i, \ 1 \leq i \leq M$, a vertex $v_{ij}, 1 \leq j \leq |\mathcal{C}_i|$ is generated. The weight of each vertex is defined as half the total cost of the links represented by that vertex, i.e., the weight of $\gamma \in \mathcal{C}_i$ is:
\begin{align}
w(\gamma) = -\cfrac{1}{2}\sum_{j \in \mathcal{N}_i} X_{ij}\pi^{(O)}_{ij}+Y_{ij}\pi^{(h)}_{ij}.
\label{eq:29}
\end{align}

Two distinct vertices representing different nodes are connected if the connections they represent are symmetric. In other words, vertices $v_{ij}$ and $v_{kl}$ with $i \neq k$ are adjacent with an edge in $\mathcal{E}$ if and only if the connections they represent are compatible, i.e., $(X_{ik},Y_{ik}) = (X_{ki},Y_{ki})$ if $(b_i,b_k) \in (\mathcal{N}_k,\mathcal{N}_i)$.

Given the planning graph as constructed above and using the result of Theorem 3 in \cite{646269}, the optimal planning \eref{Approximate_optimization_problem} using the hybrid RF/FSO technology is equivalent to a maximum weight clique in the planning graph that can be solved with moderate complexity using efficient algorithms, e.g., \cite{16513519,6607889,13265492}.

\section{Simulation Results}\label{sec:sim}

\begin{figure}[t]
\centering
\includegraphics[width=0.8\linewidth]{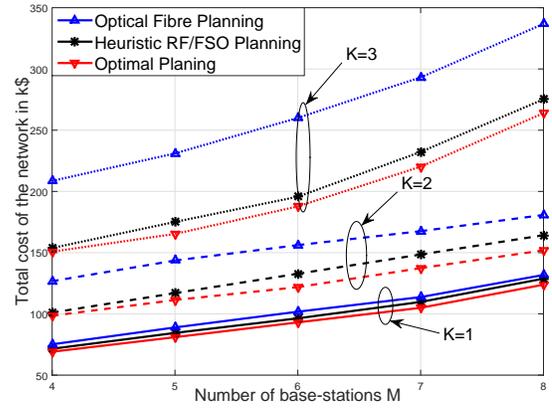}\\
\caption{Mean cost of the network versus the number of base-stations $M$. The solid lines are obtained for a $1$ link-disjoint network. The dashed and dotted lines are obtained for a $2$ and $3$ link-disjoint paths.}\label{fig:BC}
\end{figure}

\begin{figure}[t]
\centering
\includegraphics[width=0.8\linewidth]{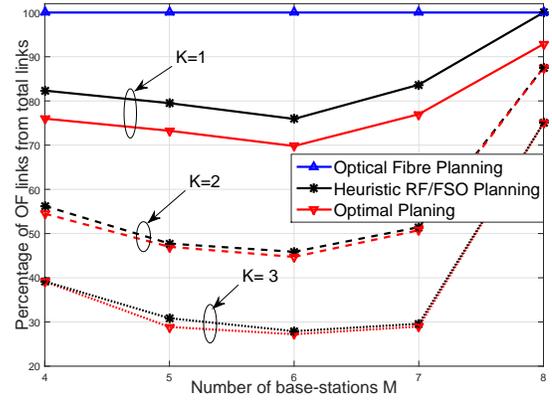}\\
\caption{Average percentage of used OF connections from the total number of used links versus the number of base-stations $M$. The solid lines are obtained for a $1$ link-disjoint network. The dashed and dotted lines are obtained for a $2$ and $3$ link-disjoint paths. Naturally, the three lines coincide for the OF planning.}\label{fig:BR}
\end{figure}

This section illustrates the performance of the proposed hybrid RF/FSO planning, i.e., optimal solution to \eref{Approximate_optimization_problem}, as compared with the optimal plan, i.e., optimal solution to \eref{Original_optimization_problem} and the OF planning, i.e., optimal solution to \eref{eq:15}, for different levels of network resilient. The system consists of a $5$ Km square area in which the base-stations are placed randomly at each iteration. As the price of the optical transceivers is negligible as compared to the cost of links, this paper does not consider its price. The price of a meter of the multi-mode OM$3$ $(50/125)$ OF is variable depending on the constructor, e.g., Asahi Kasei, Chromis, Eska, OFS HCS. This paper considers a medium price of $\pi^{(O)}=13.5\$$. On the other hand, as the amount of hybrid RF/FSO depends mainly on the cost of the transceiver, its price is assumed to be independent of the distance separating the two nodes and set to $\pi^{(h)}=20k\$$, a medium price according to different constructors, e.g., fSONA, LightPointe, and RedLine.

The reliability and provided data rate of a hybrid RF/FSO links are assumed to be only a function of the distance separating the two ends nodes. Furthermore, the paper considers that the reliability threshold $\alpha$ is satisfied for a distance of $2$ Km after which it decays exponentially. The provided data rate is considered to follows a similar model in which the targeted data rate is satisfied over $3$ Km and decreases exponentially for farther distances.

The reliability threshold is fixed to $\alpha=0.95$ in all the simulations. The number of base-stations $M$ and link-disjoint paths $K$ changes in the simulations so as to illustrate the performance of the proposed planning in different settings.

\begin{figure}[t]
\centering
\includegraphics[width=0.8\linewidth]{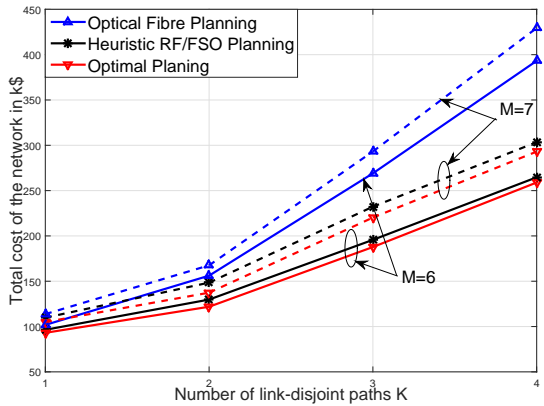}\\
\caption{Mean cost of the network versus the number of link-disjoint paths $K$. The solid lines are obtained for a system composed of $M=6$ base-stations. The dashed lines are obtained for a network with $M=7$ base-stations.}\label{fig:PC}
\end{figure}

\begin{figure}[t]
\centering
\includegraphics[width=0.8\linewidth]{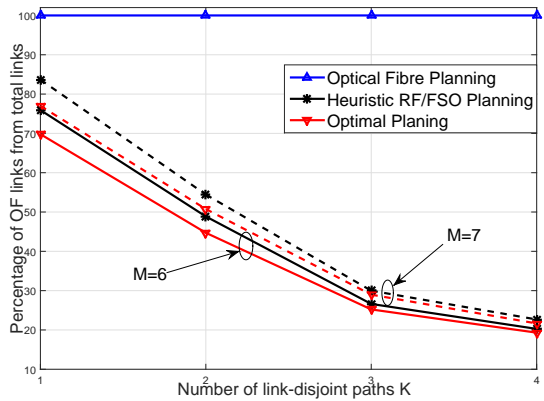}\\
\caption{Average percentage of used OF connections from the total number of used links versus the number of link-disjoint paths $K$. The solid lines are obtained for a network composed of $M=6$ base-stations. The dashed lines are obtained for a network with $M=7$ base-stations. Naturally, the two lines coincide for the OF planning.}\label{fig:PR}
\end{figure}

\fref{fig:BC} plots the total deployment cost of the network versus the number of base-stations $M$, for a various number of link-disjoint paths. It can easily be seen that the degradation of the proposed solution is negligible against the optimal solution for any number of base-stations and link-disjoint paths. The degradation is even less severe for a high number of base-stations with an enormous gain in complexity, especially for a large number of nodes in the network.

\fref{fig:BR} shows the ratio of the OF link used against the number of base-stations $M$, for a various number of link-disjoint paths. We can clearly see that for any number of base-stations, the percentage of used OF links decreases with the number of link-disjoint paths. This can be explained by the fact that the reliability and data rate constraint are satisfied for a small number of links. The additional links to guarantee the required resilience level in the network are hybrid FR/FSO links as they are more cost efficient that OF links.

Finally, to quantify the performance of the proposed algorithms with respect to the number of link-disjoint paths $K$, \fref{fig:PC} and \fref{fig:PR} show, respectively, the total cost of deployment and the percentage of used OF links versus the number of link-disjoint paths in networks composed of $M=6$ and $M=7$ base-stations. \fref{fig:PC} shows that both the optimal solution and the proposed one scales better with the number of link-disjoint paths as compared with the OF planning. This is due to the relative prices of hybrid RF/FSO and OF links. As the advantage of an OF link is especially its perfect reliability and satisfying data rate, for a densely connected network, these two constraints can be satisfied with a large number of hybrid RF/FSO links. As a result, the total cost of the system decreases drastically as compared with the OF planning as the number of link-disjoint paths increases.

The analysis is confirmed by \fref{fig:PR} that shows a constant decrease in the number of used OF links as the resilience degree increases in the network. Finally, it is worth mentioning that the performance of the proposed algorithm approaches the optimal solution for a large number of link-disjoint paths, which emphasises the close-to-optimal performance of the proposed algorithm.

\section{Conclusion}\label{sec:con}

This paper addresses the problem of resilient backhaul network design using the hybrid RF/FSO technology. In order to cope with link failure, the paper proposes a hybrid RF/FSO backhaul network which provides a minimum $K$ distinct and disjoint links connecting each two base-stations. The paper suggests finding the optimal type of links, OF or hybrid RF/FSO connection, between each two nodes so as to minimize the total deployment cost under the practical connectivity, reliability and data rate constraints. Given the complexity of the optimal solution, the paper approximates, under a specified realistic assumption about the cost of OF and hybrid RF/FSO links, the planning problem by reformulating it as a maximum weight clique in the planning graph, which can be solved using efficient algorithms. Simulation results suggest that the proposed heuristic solution has a close-to-optimal performance for a significant gain in computation complexity.

\section{Proof of \thref{th1}}\label{ap1}

This section shows that the solution reached by \algref{alg1} is the optimal solution to the optimization problem illustrated in \eref{eq:15}. To establish the result, an induction approach concerning the number of link-disjoint paths is used herein. In other words, this section aims to show that:
\begin{itemize}
\item The optimal solution to a $k$ link-disjoint path system contains the optimal solution to a $k-1$ link-disjoint path network.
\item Given the optimal solution to a $k-1$ link-disjoint path system, \algref{alg2} produces the optimal solution to a $k$ link-disjoint path network.
\item Each iteration $k$ of \algref{alg1} provides the same solution as \algref{alg2} with the optimal solution to a $k-1$ link-disjoint path system as an input.
\end{itemize}
Showing the steps displayed above is equivalent to showing that \algref{alg1}, for some iteration $K$, produces the optimal solution to the optimization problem \eref{eq:15}.

\begin{algorithm}[t]
\begin{algorithmic}
\REQUIRE $\mathcal{B}$, $X_{ij}^{k-1}$, and $\pi^{(O)}(.)$.
\STATE Initialize $X_{ij}^k=X_{ij}^{k-1},\ 1 \leq i,j\leq M $.
\STATE Initialize $\mathcal{Z} = \varnothing$.
\FORALL {$b \in \mathcal{B}$}
\STATE Set $\mathcal{Z} = \{\mathcal{Z},\{b\}\}$.
\ENDFOR
\WHILE {$|\mathcal{Z}| > 1$}
\STATE Set $(Z_i,Z_j) = \arg \min\limits_{\substack{Z,Z^{\prime} \in \mathcal{Z} \\ Z \neq Z^{\prime}}} \left[ \min\limits_{\substack{(b,b^{\prime}) \in Z \times Z^{\prime} \\ X_{b,b^{\prime}}^{k-1} \neq 1 }} \pi^{(O)}(d(b , b^{\prime})) \right] $.
\STATE Set $(b_i,b_j) = \arg \min\limits_{\substack{(b,b^{\prime}) \in Z \times Z^{\prime} \\ X_{b,b^{\prime}}^{k-1} \neq 1 }} \pi^{(O)}(d(b , b^{\prime}))$.
\STATE Set $X_{ij}^k = X_{ji}^k = 1$.
\STATE Set $\mathcal{Z} = \mathcal{Z} \setminus \{Z_i\}$.
\STATE Set $\mathcal{Z} = \mathcal{Z} \setminus \{Z_j\}$.
\STATE Set $\mathcal{Z} = \{\mathcal{Z},\{Z_i,Z_j\}\}$.
\ENDWHILE
\end{algorithmic}
\caption{Next optimal link-disjointed connected network.}
\label{alg2}
\end{algorithm}

Let the graph be abstracted in $\mathcal{G}(\mathcal{V},\mathcal{E})$ wherein $\mathcal{V}$ is the set of base-stations and $\mathcal{E}$ represents the set of edges, i.e., $\exists \ e_{ij} \in \mathcal{E}$ if and only if $X_{ij}=1$. Furthermore, let $X_{ij}^k$ be the optimal planning for a $k$ link-disjoint path network.

Showing that the optimal solution to the $k$ link-disjoint graph contains the optimal solution to a $k-1$ link-disjoint connected one translates into the following equation:
\begin{align}
X_{ij}^{k-1}X_{ij}^{k}=X_{ij}^{k-1}, \ 1 \leq i,j \leq M.
\label{eqap1}
\end{align}

Assume that $X_{ij}^{2}$ violates this property. By definition of a $2$ link-disjoint path, the removal of one edge from each vertex results in a connected graph, i.e., $1$ link-disjoint path. Therefore, let the edge be divided into two sets, $\mathcal{A}$ and $\mathcal{B}=\mathcal{E} \setminus \mathcal{A}$ such that $\mathcal{A}$ is the maximal set that can be removed from the graph, excluding edges in $X_{ij}^{1}$, resulting in a connected graph. In other words:
\begin{align}
\mathcal{A} = \arg \max_{A \in \mathbf{A}} |A|,
\label{eqap12}
\end{align}
where
\begin{align}
\mathbf{A} = \{A \in \mathcal{P}(\mathcal{E})\ | \ X_{ij}^1 \neq 1,&\forall \ e_{ij} \in A \text{ and } \label{eqap13} \\
& \qquad \mathcal{K}(\mathcal{G}(\mathcal{V},\mathcal{E}\setminus A))=1\}\nonumber,
\end{align}
with $\mathcal{P}(\mathcal{X})$ is the power set of $\mathcal{X}$ and $\mathcal{K}$ is the number of link-disjoint paths in the network defined in \sref{sec:sys}. Such edge separation is always possible as $\mathbf{A} \neq \varnothing$. Otherwise, the removal of $\{e_{ij} \in \mathcal{E} \ | \ X_{ij}^1=1\}$ results in a disconnected graph in contradiction with the assumption of a $2$ link-disjoint graph.

Let $\mathcal{B}^*$ be the set of edges obtained from $X_{ij}^1$. Given that $X_{ij}^2$ violates the property \eref{eqap1}, then $\mathcal{B}^* \nsubseteq \mathcal{B}$. Furthermore, as $\mathcal{B}^*$ is the optimal solution to a $1$ link-disjoint graph, the cost of $\mathcal{G}(\mathcal{V},\mathcal{B}^*\cup \mathcal{A})$ is lower than the one of $\mathcal{G}(\mathcal{V},\mathcal{E})$. Besides, given that $\mathcal{A}$ produces a $1$ link-disjoint graph, then $\mathcal{B}^*\cup \mathcal{A}$ is a feasible solution to a $2$ link-disjoint paths. Therefore, $X_{ij}^{2}$ is not the optimal solution which demonstrate that the optimal solution satisfy \eref{eqap1}.

Now assume that the property hold for a $k-1$ link-disjoint graph. A similar approach is used to show the property. By definition of a $k$ link-disjoint graph, the removal of one edge from each vertex results in a $k-1$ link-disjoint graph. Let the sets be divided in two set $\mathcal{A}_k$ and $\mathcal{B}_k=\mathcal{E} \setminus \mathcal{A}_k$ such that $\mathcal{A}_k$ is defined in a same manner as in \eref{eqap12} and $\mathbf{A}_k$ as in \eref{eqap13}, i.e.,
\begin{align}
\mathbf{A} = \{A \in \mathcal{P}(\mathcal{E})\ | \ X_{ij}^1 \neq 1,&\forall \ e_{ij} \in A \text{ and } \\
& \qquad \mathcal{K}(\mathcal{G}(\mathcal{V},\mathcal{E}\setminus A))=k-1\}\nonumber,
\end{align}

Using an argument similar to the one employed in the previous paragraph, it is easy to see that $\mathbf{A} \neq \varnothing$. Let $\mathcal{B}^*_k$ be the optimal solution to the planning with $k-1$ link-disjoint paths. Hence, as $X_{ij}^k$ violates the property and that $\mathcal{B}^*_k$ is the optimal solution, then $\mathcal{B}^*_k \nsubseteq \mathcal{B}_k$ and $\pi(\mathcal{G}(\mathcal{V},\mathcal{B}^*_k\cup \mathcal{A}_k)) \geq \pi(\mathcal{G}(\mathcal{V},\mathcal{E}))$. Finally, as $\mathcal{G}(\mathcal{V},\mathcal{B}^*_k\cup \mathcal{A}_k)$ is a $k$ link-disjoint graph, the property is shown to apply to all $k<M$ which concludes the proof.

From the analysis above, to produce a $k$ link-disjoint paths network, connections are added to a $k-1$ link-disjoint system. Furthermore, from the analysis above, removing the connections similar to the $k-1$ link-disjoint system, i.e., $\{e_{ij} \in \mathcal{E} \ | \ X_{ij}^{k-1}=1\}$, results in a connected network. Therefore, to produce the optimal solution to a $k$ link-disjoint paths network, the connections that should be added are those that produce a connected a network at minimum cost while prohibiting already existing connections. To solve the problem, this section designs \algref{alg2} as an extension of the Algorithm 1 proposed in \cite{117665}. The fundamental difference is that the newly designed algorithm prohibits connections existing in $X_{ij}^{k-1}$. As the proof follows similar steps than the one used in demonstrating Theorem 1 in \cite{646269}, they are omitted in this paper.

Finally, to conclude the proof it is sufficient to notice that prohibiting a given connection is similar to defining its weight as infinity. In fact, it is enough to show that there exists at least another link between any two arbitrary clusters $Z$ and $Z^\prime$ with a weight $\pi(Z,Z^\prime) < \infty$ to show the equivalence between \algref{alg1} and \algref{alg2}.

\bibliographystyle{IEEEtran}
\bibliography{citations}

% Generated by IEEEtran.bst, version: 1.13 (2008/09/30)
\begin{thebibliography}{10}
\providecommand{\url}[1]{#1}
\csname url@samestyle\endcsname
\providecommand{\newblock}{\relax}
\providecommand{\bibinfo}[2]{#2}
\providecommand{\BIBentrySTDinterwordspacing}{\spaceskip=0pt\relax}
\providecommand{\BIBentryALTinterwordstretchfactor}{4}
\providecommand{\BIBentryALTinterwordspacing}{\spaceskip=\fontdimen2\font plus
\BIBentryALTinterwordstretchfactor\fontdimen3\font minus
  \fontdimen4\font\relax}
\providecommand{\BIBforeignlanguage}[2]{{%
\expandafter\ifx\csname l@#1\endcsname\relax
\typeout{** WARNING: IEEEtran.bst: No hyphenation pattern has been}%
\typeout{** loaded for the language `#1'. Using the pattern for}%
\typeout{** the default language instead.}%
\else
\language=\csname l@#1\endcsname
\fi
#2}}
\providecommand{\BIBdecl}{\relax}
\BIBdecl

\bibitem{546843}
J.~G. Andrews, S.~Buzzi, W.~Choi, S.~Hanly, A.~Lozano, A.~C.~.~K. Soong, and
  J.~C. Zhang, ``What will {5G} be?'' \emph{IEEE Journal on Selected Areas in
  Communications}, vol.~32, no.~6, pp. 1065--1082, Jun. 2014.

\bibitem{5473878}
O.~Tipmongkolsilp, S.~Zaghloul, and A.~Jukan, ``The evolution of cellular
  backhaul technologies: Current issues and future trends,'' \emph{IEEE
  Communications Surveys \& Tutorials}, vol.~13, no.~1, pp. 97--113, First
  Quarter 2011.

\bibitem{5185525}
S.~Chia, M.~Gasparroni, and P.~Brick, ``The next challenge for cellular
  networks: {B}ackhaul,'' \emph{IEEE Microwave Magazine}, vol.~10, no.~5, pp.
  54--66, August 2009.

\bibitem{64161469}
K.~Kazaura, K.~Omae, T.~Suzuki, M.~Matsumoto, E.~Mutafungwa, K.~Asatani,
  T.~Murakami, K.~Takahashi, H.~Matsumoto, K.~Wakamori \emph{et~al.},
  ``Experimental demonstration of next-generation fso communication system,''
  in \emph{Optics East 2006}.

\bibitem{6844864}
M.~Khalighi and M.~Uysal, ``Survey on free space optical communication: A
  communication theory perspective,'' \emph{IEEE Communications Surveys \&
  Tutorials}, vol.~16, no.~4, pp. 2231--2258, Fourth quarter 2014.

\bibitem{545168}
B.~Timus, ``Studies on the viability of cellular multihop networks with fixed
  relays,'' 2009.

\bibitem{5484165}
L.~Song and J.~Shen, \emph{Evolved Cellular Network Planning and Optimization
  for UMTS and LTE}.\hskip 1em plus 0.5em minus 0.4em\relax CRC Press, 2010.

\bibitem{Smadi:09}
M.~N. Smadi, S.~C. Ghosh, A.~A. Farid, T.~D. Todd, and S.~Hranilovic,
  ``Free-space optical gateway placement in hybrid wireless mesh networks,''
  \emph{J. Lightwave Technol.}, vol.~27, no.~14, pp. 2688--2697, Jul 2009.

\bibitem{Rajakumar:08}
V.~Rajakumar, M.~N. Smadi, S.~C. Ghosh, T.~D. Todd, and S.~Hranilovic,
  ``Interference management in wlan mesh networks using free-space optical
  links,'' \emph{J. Lightwave Technol.}, vol.~26, no.~13, pp. 1735--1743, Jul
  2008.

\bibitem{6134071}
F.~Ahdi and S.~Subramaniam, ``Optimal placement of {FSO} links in hybrid
  wireless optical networks,'' in \emph{Proc. of IEEE Global Telecommunications
  Conference (GLOBECOM' 2011), Houston, Texas, USA}, Dec 2011, pp. 1--6.

\bibitem{4609027}
V.~Rajakumar, M.~Smadi, S.~Ghosh, T.~Todd, and S.~Hranilovic, ``Interference
  management in {WLAN} mesh networks using free-space optical links,''
  \emph{IEEE Journal of Lightwave Technology}, vol.~26, no.~13, pp. 1735--1743,
  July 2008.

\bibitem{6777766}
Y.~Li, M.~Pioro, and V.~Angelakisi, ``Design of cellular backhaul topology
  using the fso technology,'' in \emph{Proc. of 2nd International Workshop on
  Optical Wireless Communications (IWOW' 2013), Newcastle Upon Tyne, UK}, Oct
  2013, pp. 6--10.

\bibitem{129456}
A.~Douik, H.~Dahrouj, T.~Y. Al-Naffouri, and M.-S. Alouini, ``Cost-effective
  backhaul design using hybrid radio/free-space optical technology,'' in
  \emph{Proc. of IEEE International Conference on Communication (ICC' 15)
  Workshop Next Generation Backhaul/Front haul Networks (BackNets' 15), London,
  UK, June 2015.}

\bibitem{646269}
------, ``Hybrid radio/free-space optical design for next generation backhaul
  systems,'' \emph{Submitted to IEEE Trans. on Commun. Available Arxiv
  e-prints}, vol. abs/1502.00137, 2015.

\bibitem{Dahrouj_backnet_magazine}
H.~Dahrouj, A.~Douik, T.~Y. Al-Naffouri, and M.-S. Alouini, ``Cost-effective
  hybrid {RF/FSO} backhaul solution for next generation wireless systems,''
  \emph{accepted in IEEE Wireless Communications Magazine}, Oct. 2015.

\bibitem{16513519}
P.~R.~J. Ostergard, ``A fast algorithm for the maximum clique problem,''
  \emph{Discrete Appl. Math}, vol. 120, pp. 197--207.

\bibitem{6607889}
Z.~Akbari, ``A polynomial-time algorithm for the maximum clique problem,'' in
  \emph{Proc. of IEEE/ACIS 12th International Conference on Computer and
  Information Science (ICIS' 2013), Niigata, Japan}, June 2013, pp. 503--507.

\bibitem{13265492}
K.~Yamaguchi and S.~Masuda, ``A new exact algorithm for the maximum-weight
  clique problem,'' in \emph{Proc. Of the 23rd International Technical
  Conference on Circuits/Systems, Computers and Communications (ITC-CSCC'
  2008), Yamaguchi, Japan}.

\bibitem{117665}
A.~Douik, H.~Dahrouj, T.~Y. Al-Naffouri, and M.-S. Alouini, ``Coordinated
  scheduling for the downlink of cloud radio-access networks,'' \emph{Proc. of
  {IEEE} International Conference on Communications (ICC' 2015), London, UK.},
  2015.

\end{thebibliography}

\end{document}